\documentclass{llncs}

\usepackage{graphicx}
\usepackage{xcolor}
\usepackage{listings}
\usepackage{mips}
\usepackage{amsmath}
\usepackage{cancel}
\usepackage{tikz}
\usepackage{url}
\usetikzlibrary{shapes,backgrounds,arrows, fit, positioning}
\usepackage{algpseudocode}

\usepackage{geometry}
\geometry{
  a4paper,         
  textwidth=13.14cm,  
  textheight=20cm, 
  heightrounded,   
  hratio=1:1,      
  vratio=2:3,      
}

\newcommand{\crule}[3][black]{\textcolor{#1}{\rule{#2}{#3}}}

\newcommand{\erel}{\textsf{equiv}}
\newcommand{\merel}[1]{\textsf{equiv}_{#1}}
\newcommand{\pequiv}[2]{#1 \ \erel \ #2}
\newcommand{\pmequiv}[3]{#1 \ \merel{#3} \ #2}
\newcommand{\pnequiv}[2]{#1 \ \cancel{\erel} \ #2}
\newcommand{\pnmequiv}[3]{#1 \ \cancel{\merel{#3}} \ #2}

\definecolor{CommentGreen}{rgb}{0,.6,0}
\lstset{
 language=[mips]Assembler,
 escapechar=@, 
 keepspaces,   
 basicstyle=\footnotesize\ttfamily\bfseries,
 commentstyle=\color{CommentGreen},
 stringstyle=\color{cyan},
 showstringspaces=false,
 keywordstyle=[1]\color{blue},    
 keywordstyle=[2]\color{magenta}, 
 keywordstyle=[3]\color{red},     
 }
 
\lstset{language=[mips]Assembler}

\title{
Tamarin: Concolic Disequivalence for MIPS
\\
\large Technical Report}
\author{Abel Nieto}
\institute{
University of Waterloo\\
\email{anietoro@uwaterloo.ca}
}

\pagestyle{plain}

\begin{document}

\maketitle

\begin{abstract}
Given two MIPS programs, when are they equivalent? At first glance, this is tricky to define, because of the unstructured nature of assembly code. We propose the use of alternating concolic execution to detect whether two programs are disequivalent. We have implemented our approach in a tool called Tamarin, which includes a MIPS emulator instrumented to record symbolic traces, as well as a concolic execution engine that integrates with the Z3 solver. We show that Tamarin is able to reason about program disequivalence in a number of scenarios, without any a-priori knowledge about the MIPS programs under consideration.
\end{abstract}

\section{Introduction}

We are staring at two opaque black boxes laying at our feet. Each box has a narrow slot through which we can place items in the box, but we cannot quite see what is inside. They look approximately like this:

\vspace{1mm}
\crule{1.5cm}{1.5cm}, \crule{1.5cm}{1.5cm}
\vspace{1mm}

We know each box contains an animal, but we do not know which specific animal is in each one. We would like to find out if both boxes contain the same species of animal. Our solution is simple: we take two carrots, and drop one in each box through the slots.

After a while, a chewing sound emerges from the boxes. We peer into them and, indeed, it looks like the carrots were successfully eaten. Triumphantly, we declare that the boxes contain the same species of animal. The truth is altogether different:

\vspace{1mm}
\fbox{\includegraphics[width=1.5cm]{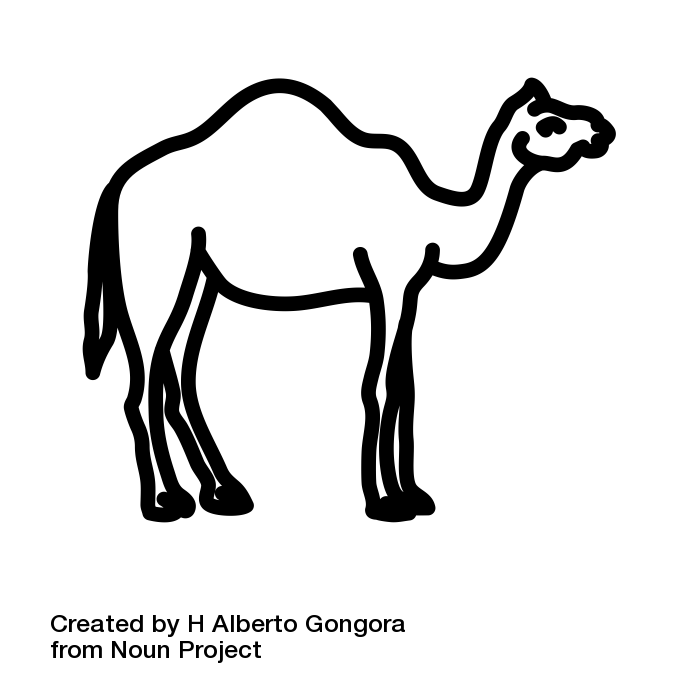}}, \fbox{\includegraphics[width=1.5cm]{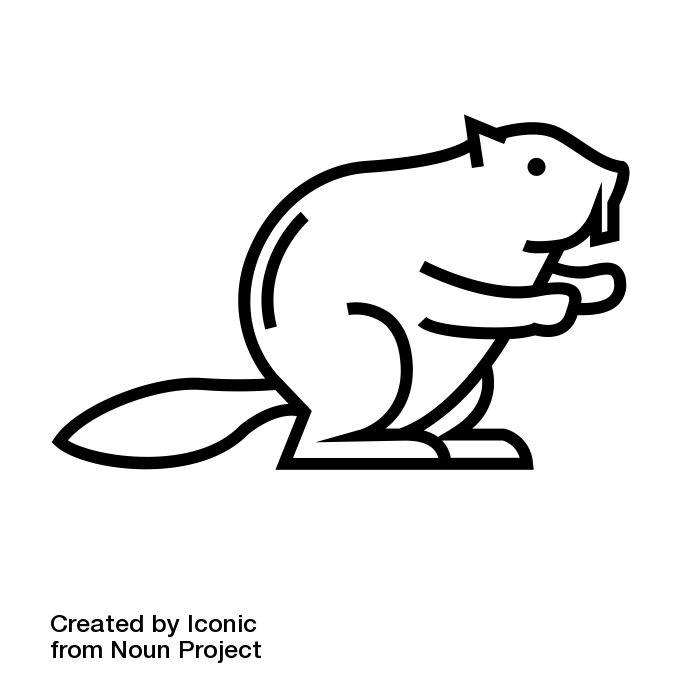}}
\vspace{1mm}

The boxes are assembly programs. The animals are the functions those programs compute. The carrot is unit testing. The task was to determine whether the programs were equivalent. And we failed at it. In this paper, we show a technique that is better than a carrot.

\section{Program Equivalence for MIPS}
\label{progequiv}

Let us set up the problem a bit more formally. Consider the set $P$ of MIPS-assembly programs that satisfy two restrictions: they take as inputs only the values of registers $\$1$ and $\$2$, and when they stop executing we define their output to be (exclusively) the value of $\$3$. Other side effects, such as printing values to the screen, or system calls, are disallowed.

We can now define a relation $\erel \subseteq P \times P$ (and its complement, $\cancel{\erel}$) of equivalent programs. Given $P_1, P_2 \in P$, we say that $\pequiv{P_1}{P_2}$ (read ``$P_1$ is equivalent to $P_2$'') if, for all inputs $\$1$ and $\$2$, one of the following holds:
\begin{itemize}
\item Both $P_1$ and $P_2$ fail during execution (for example, due to a divide-by-zero error).
\item $P_1$ and $P_2$ stop with the same output in $\$3$. 
\end{itemize} 

For example, the two programs in Figure \ref{equivprogs} are equivalent.
 
\begin{figure}
\begin{multicols}{2}
\begin{lstlisting}[escapeinside={(*}{*)}]
# P_1
add $3, $1, $2
\end{lstlisting}
\vfill\null
\columnbreak
\begin{lstlisting}
# P_2
add $4, $1, $1
lis $5
42
sw $4, 0, $5
add $3, $1, $2
\end{lstlisting}
\end{multicols}
\caption{$\pequiv{P_1}{P_2}$}
\label{equivprogs}
\end{figure}

Notice that $\pequiv{P_1}{P_2}$ even though $P_2$ modifies the contents of the memory and an additional register ($\$4$), because:
\begin{itemize}
\item Both $P_1$ and $P_2$ terminate without errors.
\item The value of $\$3$ will be the same when they do so.
\end{itemize}

Unfortunately, even though $\erel$ captures an already-simplified notion of equivalence\footnote{For example, $\erel$ has a very narrow notion of output that excludes side effects.}, a decision procedure for it does not exist, due to Rice's theorem.

To get decidability back, we define a new class of relations $\merel{S} \subseteq P \times P$ (whose complement is $\cancel{\merel{S}}$). We say that $\pmequiv{P_1}{P_2}{S}$ (read ``$P_1$ is $S$-equivalent to $P_2$'') if, for all inputs, one of the following holds:
\begin{itemize}
\item Either $P_1$ or $P_2$ does not stop within $S$ steps (we can think of each CPU cycle as one step).
\item Both $P_1$ and $P_2$ fail.
\item Both $P_1$ and $P_2$ stop with the same output. 
\end{itemize}

The $\merel{S}$ relation captures the notion that we cannot tell $P_1$ and $P_2$ apart by running them for at most $S$ steps. Figure \ref{sequiv} shows an example of two programs that are $S$-equivalent for $S=10$, but not equivalent. This is the case because $P_2$ loops while the counter is less than $42$, so with 10 steps in our ``budget'' we will have to stop $P_2$ before the loop is over and we can observe the different result.

\begin{figure}
\begin{multicols}{2}
\begin{lstlisting}[escapeinside={(*}{*)}]
# P_1
add $3, $1, $2
\end{lstlisting}
\vfill\null
\columnbreak
\begin{lstlisting}
# P_2
  add $4, $0, 1  # counter
  add $5, $0, 42 # upper bound
loop:
  slt $6, $4, $5 
  beq $6, $0, end
  add $4, $4, 1
  beq $0, $0, loop
end:
  add $3, $1, $1  
\end{lstlisting}
\end{multicols}
\caption{$\pmequiv{P_1}{P_2}{10}$, but $\pnequiv{P_1}{P_2}$}.
\label{sequiv}
\end{figure}

Given a fixed $S$, the $\merel{S}$ relation is decidable because there is a finite number of inputs to try, and for each input we only need to run the programs a finite number of steps.

We already saw that equivalence not always implies $S$-equivalence. However, the converse always holds. The following lemma shows that $\merel{S}$ over-approximates $\erel$.

\begin{lemma}
\label{overapprox}
$\forall S, P_1, P_2$, $\pequiv{P_1}{P_2} \implies \pmequiv{P_1}{P_2}{S}$.
\end{lemma}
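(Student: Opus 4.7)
The plan is to fix arbitrary $S$, $P_1$, $P_2$, and an arbitrary input assignment to registers $\$1$ and $\$2$, then proceed by case analysis on the two disjuncts guaranteed by the hypothesis $\pequiv{P_1}{P_2}$. Because both $\erel$ and $\merel{S}$ universally quantify over inputs in the same way, it suffices to reason pointwise for each choice of inputs and then re-universalize at the end.

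First I would handle the case in which both $P_1$ and $P_2$ eventually fail. Here I split on whether those failures occur within the first $S$ CPU steps. If they do, we immediately land in the ``both fail'' clause of $\merel{S}$. Otherwise, at least one program has not yet failed (and certainly has not halted successfully) after $S$ steps, which is exactly the first clause of $\merel{S}$: it ``does not stop within $S$ steps.''

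The second case is symmetric: both $P_1$ and $P_2$ halt successfully with a common value in $\$3$. Again I split on whether both finish within $S$ steps. If they do, the values observed in $\$3$ at termination are the agreed-upon outputs, so we satisfy the ``both stop with the same output'' clause of $\merel{S}$. If at least one program takes more than $S$ steps to terminate, the ``does not stop within $S$ steps'' clause is immediate.

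I do not expect a genuine obstacle here; the proof is essentially bookkeeping once the right case split is chosen. The only mild subtlety is that the three disjuncts defining $\merel{S}$ are not mutually exclusive, so one must be careful to show that some clause is satisfied rather than identifying a unique one. It is also worth noting explicitly that ``does not stop within $S$ steps'' is understood to cover both genuine non-termination and the case of simply exceeding the step budget before halting or failing; with that reading, the two sub-case splits above are exhaustive.
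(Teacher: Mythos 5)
Your proof is correct and is essentially the paper's argument: fix arbitrary inputs, case-split on whether the programs stop within $S$ steps, and invoke the clauses of $\merel{S}$ accordingly. The only difference is that you nest the case analysis in the opposite order (first on the disjuncts of $\erel$, then on the step budget, where the paper splits on the step budget first), which is immaterial.
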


\begin{proof}
Let $\pequiv{P_1}{P_2}$. Consider two arbitrary inputs $x$ and $y$. Then we have one of two cases:
\begin{itemize}
\item Either $P_1$ or $P_2$ (or both) do not stop within S steps when run on $x$ and $y$. This is the first case in the definition of $\merel{S}$.
\item Both $P_1$ and $P_2$ stop within S steps. Then because they are equivalent, we know that they either fail with an error, or both stop with the same output. These are the second and third cases in the definition of $\merel{S}$.
\end{itemize}

Therefore, we must have $\pmequiv{P_1}{P_2}{S}$.
\end{proof}

\begin{corollary}
\label{soundness}
$\pnmequiv{P_1}{P_2}{S} \implies \pnequiv{P_1}{P_2}$. 
\end{corollary}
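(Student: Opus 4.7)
The plan is to observe that the corollary is nothing more than the contrapositive of Lemma~\ref{overapprox}, and so the proof reduces to a short logical manipulation of a result that is already in hand.

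More concretely, I would first restate Lemma~\ref{overapprox} as the implication $\pequiv{P_1}{P_2} \implies \pmequiv{P_1}{P_2}{S}$ for arbitrary but fixed $S$, $P_1$, $P_2$. Then I would negate both sides: the negation of the conclusion is $\pnmequiv{P_1}{P_2}{S}$, and the negation of the hypothesis is $\pnequiv{P_1}{P_2}$. By the standard contrapositive equivalence $(A \implies B) \iff (\lnot B \implies \lnot A)$, we immediately obtain $\pnmequiv{P_1}{P_2}{S} \implies \pnequiv{P_1}{P_2}$, which is what is to be shown. Since $S$, $P_1$, and $P_2$ were arbitrary, the universal quantification is preserved.

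There is essentially no obstacle here; the only thing to be careful about is that $\cancel{\merel{S}}$ and $\cancel{\erel}$ are defined as the set-theoretic complements of $\merel{S}$ and $\erel$ respectively, so that the syntactic negations line up cleanly with the contrapositive. Since this is exactly how the two complements were introduced earlier in the section, no further justification is needed, and the corollary follows in one line.
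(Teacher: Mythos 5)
Your proof is correct and is exactly the paper's argument: the corollary is obtained as the contrapositive of Lemma~\ref{overapprox}, with the complements $\cancel{\merel{S}}$ and $\cancel{\erel}$ serving as the negations.
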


\begin{proof}
This is just the contrapositive of Lemma \ref{overapprox}.
\end{proof}

Corollary \ref{soundness} can be used to argue the soundness (with respect to $\cancel{\erel}$) of any decision procedure that under-approximates $\cancel{\merel{S}}$. In the next section we will show one such under-approximation based on concolic execution.

\section{Concolic Disequivalence}
\label{ideasection}

We know from Corollary \ref{soundness} that any relation that under-approximates $\cancel{\merel{S}}$ is sound. Figure \ref{hierarchy}  shows why we want an under-approximation: efficiency. $\cancel{\erel}$ captures the class of programs that are disequivalent, but is undecidable. $\cancel{\merel{S}}$ is decidable, but likely cannot be computed efficiently. Therefore, we look for a subset of $\cancel{\merel{S}}$ (an under-approximation) that can be efficiently computed.

\begin{figure}
\def\firstcircle{(0,-1.8cm) circle (1.2cm)}
\def\secondcircle{(0,-1cm) circle (2cm)}
\def\thirdcircle{(0,0) circle (3cm)}
\begin{tikzpicture}
    \begin{scope}
        \draw[fill=green,opacity=0.1] \firstcircle;
        \draw[fill=yellow,opacity=0.1] \secondcircle;
        \draw[fill=red,opacity=0.1] \thirdcircle;
        \draw (0, -2) node  {??? (efficient)};
        \draw (0, 0) node {$\cancel{\merel{S}}$ (inefficient)};
        \draw (0, 2) node  {$\cancel{\erel}$ (undecidable)};
    \end{scope}
\end{tikzpicture}
\label{hierarchy}
\caption{Hierarchy of disequivalence relations}
\end{figure}
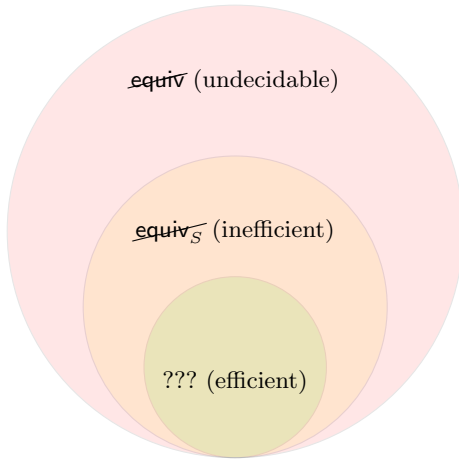

To fill the missing relation in Figure \ref{hierarchy} we propose concolic disequivalence. Abstractly, concolic disequivalence is a function $\textsf{compare}(P_1, P_2, S)$ that takes as inputs two MIPS programs and returns one of two answers:
\begin{itemize}
\item ``disequivalent'', in which case $\pnmequiv{P_1}{P_2}{S}$.
\item ``possibly equivalent'', meaning that $P_1$ and $P_2$ might or might not be $S$-equivalent.
\end{itemize}

Figure \ref{concolicalgo} shows pseudocode for \textsf{compare}. The algorithm alternately executes $P_1$ and $P_2$. At every step, one of the programs is labelled as the ``driver'' and the other one as the ``verifier''. The driver program is then concolically executed, yielding a set of inputs that exercise a new program path (of the driver). The inputs can then be fed to the verifier, and the results of both driver and verifier compared. If the results are different, then we know $P_1$ and $P_2$ are disequivalent. Otherwise, the driver becomes the verifier, and vice-versa. Eventually, we will traverse all explorable paths, at which point $P_1$ and $P_2$ can be declared possibly equivalent.

\begin{figure}
\begin{algorithmic}
\Function{compare}{$P_1$, $P_2$, $S$}
\State $b \gets true$
\While{either $P_1$ or $P_2$ has unexplored paths}
  \If{$b$} \Comment{Select driver and verifier}
    \State $D \gets P_1$
    \State $V \gets P_2$
  \Else
    \State $D \gets P_2$
    \State $V \gets P_1$
  \EndIf
  \If{$D$ has unexplored paths}
    \State $I \gets$ new inputs that exercise an unexplored path
    \State $R_1 \gets \textsf{run}(P_1, I, S)$
    \State $R_2 \gets \textsf{run}(P_2, I, S)$
    \If{both $P_1$ and $P_2$ stopped}
      \If{both $P_1$ and $P_2$ stopped with an error}
      \State \Comment{do nothing}
      \ElsIf{either $P_1$ or $P_2$ stopped with an error}
        \State \Return{``disequivalent''}
      \Else
        \If{$R_1 \neq R_2$}
          \State \Return{``disequivalent''}
        \EndIf
      \EndIf
    \EndIf
    \State mark the path discovered by $I$ as explored
    \State $b \gets \neg b$
  \EndIf
\EndWhile
\State \Return{``possibly equivalent''}
\EndFunction
\end{algorithmic}
\label{concolicalgo}
\caption{Concolic disequivalence algorithm}
\end{figure}

We now give an example of how \textsf{compare} operates. Consider the sample programs below:

\begin{multicols}{2}
\begin{lstlisting}[escapeinside={(*}{*)}]
# P_1
  bne $1, 42, end
  add $3, $3, $0
end:
  add $3, $1, $2
\end{lstlisting}
\vfill\null
\columnbreak
\begin{lstlisting}
# P_2
  add $3, $1, $2  
  bne $2, 100, end
  add $3, $3, $2
end:
\end{lstlisting}
\end{multicols}

Figure \ref{algoruns} summarizes the state of the algorithm as it compares $P_1$ and $P_2$. First, notice how the driver and verifier roles flip between $P_1$ and $P_2$ in consecutive runs. Every row indicates the input values, as well as the outputs $R_D$ and $R_V$ of the driver and verifier, respectively. At every run, we also record the path taken by the driver. Path conditions are negated to make sure we explore new paths in every iteration. In the fourth iteration, we can see that compare finds that the input pair $\$1 = 1, \$2 = 100$ leads to different outputs in the driver and verifier. At this point, $P_1$ and $P_2$ are declared as disequivalent. 

Notice that in order to uncover the different, it is necessary to concolically explore the paths in both $P_1$ and $P_2$, and not only of $P_1$. In Figure \ref{algoruns}, runs 1 and 3 explore both branches of the conditional jump in $P_1$, but they exercise the same path in $P_2$. Only after we also execute $P_2$ do we find a counterexample to equivalence.

\begin{figure}
\begin{tabular}{l | l | l | l | l | l | l | l}
\textbf{Run} & \textbf{Driver} & \textbf{Verifier} & $\mathbf{\$1}$ & $\mathbf{\$2}$ & \textbf{Path} & $\mathbf{R_D}$ & $\mathbf{R_V}$ \\
\hline 
1 & $P_1$ & $P_2$ & 1 & 1 & $\$1 \neq 42$ & 2 & 2 \\
2 & $P_2$ & $P_1$ & 1 & 1 & $\$2 \neq 100$ & 2 & 2\\
3 & $P_1$ & $P_2$ & 42 & 1 & $\$1 = 42$ & 2 & 2 \\
4 & $P_2$ & $P_1$ & 1 & 100 & $\$2 = 100$ & \textcolor{red}{201} & \textcolor{red}{2} 
\end{tabular}
\caption{A sample execution of \textsf{compare}}
\label{algoruns}
\end{figure}

\section{Tamarin}

Tamarin\footnote{Tamarins are small-sized monkeys from Central and South America. They are related to marmosets, which are also New World monkeys, and less-importantly give name to the black-box submission and testing server in use at the University of Waterloo as of Fall 2017 \cite{spacco2006marmoset}.} is a Scala implementation of the \textsf{compare} algorithm from Section \ref{ideasection}. We first give an overview of the major modules in Tamarin, shown in Figure \ref{tamarinstructure}, and then describe them in more detail in subsequent sections:

\begin{itemize}
\item \textsf{Concolic} is the entry point to Tamarin. It implements the top-level loop that visits unexplored paths, alternating between the two programs being compared. \textsf{Concolic} uses the other modules as helpers (in Figure \ref{tamarinstructure}, requests made by a module appear as solid lines, and responses to prior requests are shown with dashed lines).
\item \textsf{CPU} is a MIPS emulator that is instrumented to record symbolic traces containing path conditions, which can later be negated to explore new paths.
\item \textsf{Trace} consumes raw traces coming from \textsf{CPU} and transforms them in multiple ways so that they can be handed over to the \textsf{Z3} solver.
\item \textsf{Query} translates (modified) \textsf{CPU} traces into equivalent logical formulae. The formulae are then solved by the \textsf{Z3} solver, producing new inputs that, if fed to the program under test, will lead to traversing unexplored paths.
\item \textsf{Z3} is an SMT solver developed at Microsoft \cite{de2008z3}. We use it as black box for solving queries, over the theories of bitvectors and arrays, that result from program traces.
\end{itemize}

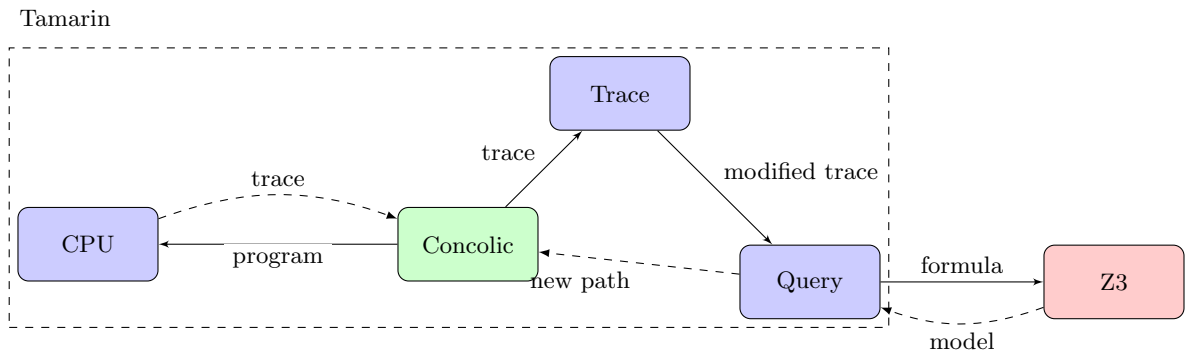
\begin{figure}
\tikzstyle{block} = [rectangle, draw, fill=blue!20, 
    text width=5em, text centered, rounded corners, minimum height=3em]
    \tikzstyle{mainblock} = [rectangle, draw, fill=green!20, 
    text width=5em, text centered, rounded corners, minimum height=3em]
\tikzstyle{line} = [draw, -latex']
    
\begin{tikzpicture}[node distance = 2cm, auto]
    \node [mainblock] (concolic) {Concolic};
    \node [block, left of=concolic, node distance=5cm] (cpu) {CPU};
    \node [block, above of=concolic, right of=concolic, node distance=2cm] (trace) {Trace};
    \node [block, below of=trace, right of=trace, node distance=2.5cm] (query) {Query};
    \node [block, fill=red!20, right of=query, node distance=4cm] (z3) {Z3};
    \node [draw=black, dashed, fit= (concolic) (cpu) (trace) (query)] {};
    \node at (-5.3, 3) {Tamarin};
    
    \draw [line] (concolic) -- (cpu) node [midway, fill=white] {program};
    \draw  (cpu)  edge [-latex, dashed, bend left=20] node[midway] {trace} (concolic);
    \draw [line] (concolic) -- (trace) node [midway]{trace};
    \draw [line] (trace) -- (query) node [midway]{modified trace};
    \draw [-latex, dashed] (query) -- (concolic) node [midway]{new path};
    \draw [line] (query) -- (z3) node [midway]{formula};
    \draw [-latex, dashed] (z3) edge [-latex, dashed, bend left=20] node[midway]{model} (query);
\end{tikzpicture}
\caption{Overview of Tamarin's modules}
\label{tamarinstructure}
\end{figure}

\subsection{Trace Collection}

The \textsf{CPU} module is in charge of running MIPS programs and collecting symbolic traces from executions. It is based on the MIPS emulator written by Ond\v{r}ej Lhot\'{a}k for CS241E at the University of Waterloo \cite{cs241e}.

The interface to the module consists of a single function:

\begin{lstlisting}[language=scala]
def run(prog: Seq[Word], r1: Word, r2: Word, fuel: Long): RunRes
\end{lstlisting}

The \textsf{run} function takes as input a program represented as as sequence of words, the values of registers $\$1$ and $\$2$ (the inputs to the program) and a \textsf{fuel} value (explained below).

The output is an algebraic data type \textsf{RunRes} that can take one of three forms:

\begin{lstlisting}[language=scala]
trait RunRes
case class Done(state: State, trace: Trace) extends RunRes
case class NotDone(trace: Trace) extends RunRes
case class Error(ex: RuntimeException) extends RunRes
\end{lstlisting}

\begin{itemize}
\item If the program executes without error, then \textsf{Done(state, trace)} is returned. \textsf{state} is the state of the CPU after execution, including the contents of memory (which are ignored) and of register $\$3$, the output register. \textsf{trace} is the symbolic trace captured during the program's execution, and is described below.
\item If the program ran for more than \textsf{fuel} CPU cycles without stopping, then the result is \textsf{NotDone(trace)}. Notice that even though the program did not stop we can still return a trace recording the execution right until the moment we stopped it. The \textsf{fuel} argument to \textsf{run} plays the same role as the $S$ argument in Figure \ref{concolicalgo}. 
\item Finally, if there was an error during program execution (for example, an attempted division-by-zero), then we return \textsf{Error}.
\end{itemize}

Notice that due to \textsf{fuel} parameter and error boxing, the augmented emulator in Tamarin, unlike a vanilla MIPS emulator, is ``hardened'' in the sense that it can execute MIPS programs that do not stop (the emulator itself will stop) or throw errors (will be catched at the top level by the emulator). 

While the emulator is executing a program, it also records a symbolic trace of (most of) the executed instructions. We support a subset of the MIPS instruction set, containing 18 instruction types \cite{mipsinstructions}. Notably, unlike in full MIPS, there are no system calls in our supported subset.

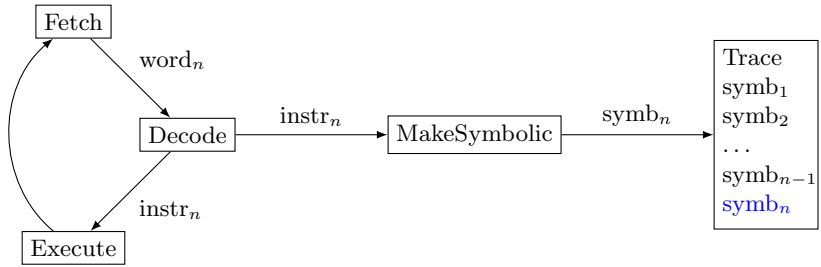
\begin{figure}
\begin{tikzpicture}[node distance = 1.5cm, auto]
\node [draw] (fetch) {Fetch};
\node [draw, below of=fetch, right of=fetch] (decode) {Decode};
\node [draw, below of=decode, left of=decode] (execute) {Execute};
\node [draw, right=2cm of decode] (symb) {MakeSymbolic};
\node [draw, right=2cm of symb, minimum height=2.5cm, text width=1.15cm] (trace) {};
\node [below right, text width=1cm] at (trace.north west) {Trace symb$_1$ symb$_2$ $\ldots$ symb$_{n - 1}$ \textcolor{blue}{symb$_n$}};

\draw [-latex] (fetch) -- (decode) node [midway] {word$_n$};
\draw [-latex] (decode) -- (execute) node [midway] {instr$_n$};
\draw [-latex, bend left=50] (execute) edge (fetch);
\draw [-latex] (decode) -- (symb) node [midway] {instr$_n$};
\draw [-latex] (symb) -- (trace) node [midway] {symb$_n$};
\end{tikzpicture}
\caption{Instrumented CPU that records symbolic traces}
\label{cpucycle}
\end{figure}

The fetch-decode-execute cycle in our emulator is modified to record a symbolic representation of each instruction as it is executed (Figure \ref{cpucycle}). The symbolic instructions are then stored in a trace. There are two types of instructions in a trace: assignments and path conditions.
\begin{itemize}
\item Assignments are instructions that mutate the CPU state, but do not affect the control flow. The symbolic form of most assignments is very similar to the concrete instruction that is executed. For example, the symbolic representation of the instruction \lstinline{add $3, $1, $2} is $r_3 \gets r_1 + r_2$. 

\item Path conditions are instructions that modify the control flow of the program. This set potentially contains both conditional and unconditional jumps, but we track only conditional ones. Specifically, we record the branch taken at each conditional jump, and symbolically record it as an (in)equality. For example, if the branch \lstinline{beq $1, $2, foolabel} is not taken, then we will add $r_1 \neq r_2$ to our trace.
\end{itemize}

One additional point of note: the program counter (\textsf{pc}) is not symbolically tracked. This means Tamarin cannot reason about unconditional jumps, so only a single path is explored for the program in Figure \ref{pcnottrack}. There, Tamarin will execute $P$ with the initial inputs (a hard-coded constant) and error out, because the jump will lead to an invalid instruction. Tamarin will miss the fact that $\$1$ and $\$2$ can be given values such that we could jump to either of the labels, leading to a successful  execution. 

\begin{figure}
\begin{lstlisting}
# P
  add $4, $1, $2
  jr $4
good:
  add $3, $1, $2
  jr $31 # $31 contains the termination pc
bad:
  add $3, $1, $1
  jr $31    
\end{lstlisting} 
\caption{Not tracking the \textsf{pc} leads to under-approximations}
\label{pcnottrack}
\end{figure}

Since the \textsf{pc} is not tracked, whenever it is used in another instruction we need to substitute it by its concrete value. This technique is called concretization \cite{david2016specification}. For example, the load immediate and skip instruction has the following semantics:

\lstinline{lis $d} \hspace{2cm} $d \gets Mem[pc]; pc \gets pc + 4$

Instead of the (precise) interpretation above, we use concretization to generate the (under-approximating) symbolic instruction $d \gets v$, where $v$ is the value at the address following the \lstinline{lis} instruction.

\subsection{Transformations}
\label{transsection}

The \textsf{Trace} module also exposes just a single function:

\begin{lstlisting}[language=scala]
def transform(trace: Trace, depth: Int): Trace
\end{lstlisting}

\textsf{transform} takes as input a trace that was produced by \textsf{CPU} and a \textsf{depth} parameter (explained below), and modifies the input trace so that it can be later converted into a logical formula. The \textsf{depth} parameter limits the number of path conditions in the outputted trace: this is so we can bound the depth of our DFS as we explore new paths (see Section \ref{redux} for more details). In effect, we can think of traces as the intermediate representation (IR) for queries to the SMT solver. The \textsf{Trace} module can be then thought of as lowering the semantic complexity of traces.

Internally, \textsf{Trace} consists of multiple phases, each of type \lstinline{Trace => Trace}, which are sequentially applied to the input. They are described below.

\subsubsection{Desugaring}

 The desugaring phase removes ``complicated'' instructions by replacing them with simpler ones. Figure \ref{desugar} summarizes some of the transformations.
 
\begin{figure}
\begin{tabular}{l | l}
\textbf{Input} & \textbf{Output} \\
\hline
$Jalr(concretePC)$ & $Add(\$31, concretePC)$ \\
$Mult(s, t)$ & $Mult64(tmp, s, t); Low32(lo, tmp); High32(hi, tmp)$ \\
$Div(s, t)$ & $Quot(lo, s, t); Rem(hi, s, t)$ \\
$Mflo(d)$ & $Add(d, lo, 0)$
\end{tabular}
\caption{Desugared instructions}
\label{desugar}
\end{figure}

A couple of points of interest:
\begin{itemize}
\item $\$31$ holds by convention the return address of the current procedure. Since the \textsf{pc} is not tracked, its concretized value is used.

\item Notice how some instructions (e.g. $Mult$) are desugared into multiple instructions: this is because they in fact have more than one side effect in the CPU state.

\item While desugaring some instructions, we introduce ``helper'' instructions like $Mult64$ and $Quot$ that are not part of the MIPS instruction set. These instructions help communicate the intended semantics to the SMT solver. For example, later on we will see that the contents of registers are represented with 32-bit vectors, but multiplication of two 32-bit integers can take up to 64 bits of storage, so we need a way to let the solver know that it should upcast the product of two registers (so we can later decompose it into the low and high 32-bits): the $Mult64$ instruction accomplishes this.
\end{itemize}

\subsubsection{Simplification}

The simplification phase removes trivial path conditions, like \lstinline{beq 0, 0, foolabel}, which might be common (the specific \lstinline{beq} example is a common idiom for jumping to a loop header) but do not need to be considered by the SMT solver.

\subsubsection{Trimming}

The trimming phase just shortens the trace so that it contains at most \textsf{depth} (an argument to \textsf{Trace}) path conditions. This is so we can do a bounded DFS (again, see Section \ref{redux}) to limit the search space.

\subsubsection{SSA Conversion}
\label{ssaconv}

The SSA phase converts traces into static single assignment (SSA) form \cite{cytron1991efficiently}. SSA form is crucial to be able to transform a trace into the corresponding logical formula. Consider the example in Figure \ref{ssa}, which shows a small trace together with incorrect and correct translations.

The incorrect translation represents assignment through equality. This is intuitive, but misguided, because assignment mutates state, but equality does not. For example, suppose we append to our trace the path condition $\$3 = 9 \land \$1 = 3$. This gets translated to the equality $z = 9 \land x = 3$. But the equalities in the incorrect translation imply that $y = 0$ (because $x = x - y$) and so $z = x = y$. The path condition is then declared as unsatisfiable, but this is clearly wrong, because $\$1 = 6 \land \$2 = 3$ is a solution. 

What went wrong, and how do we fix it? As we pointed, assignment mutates the value of registers, so we need to somehow encode that the ``$\$1$'' in the last $Add$ instruction is ``not the same'' as the one in the first $Add$. This behaviour is precisely captured by SSA form, where each variable is assigned to exactly once. This means that two reads of the same register that are separated by a write to the same register will be encoded as different variables. The correct translation in Figure \ref{ssa} uses subscripts to refer to the ``versions'' of each variable after writes. If we add the same path condition as before, we can see that it gets translated as $z_1 = 9 \land x_2 = 3$, which is now satisfiable because $x_2 = x_1 - y_1$ does not imply that $x_1 = y_1$.

\begin{figure}
\begin{tabular}{l | l | l}
\textbf{Trace} & \textbf{Incorrect} & \textbf{Correct} \\
\hline
$Add(\$3, \$1, \$2)$ & $z = x + y$ & $z_1 = x_1 + y_1$ \\
$Sub(\$1, \$1, \$2)$ & $x = x - y$ & $x_2 = x_1 - y_1$ \\
$Add(\$2, \$1, \$2)$ & $y = x + y$ & $y_2 = x_2 + y_1$\\
\end{tabular}
\caption{A trace together with an incorrect translation to a logical formula and a correct translation using SSA form}
\label{ssa}
\end{figure}

The algorithms for efficiently converting code into SSA form are sophisticated \cite{cytron1991efficiently}, requiring the calculation of so-called dominators and $\phi$-functions. However, we are only interested in converting traces, which do not contain any jumps. In this case, a simple linear-time pass over the trace suffices for converting to SSA. The algorithm is shown in Figure \ref{ssaalgo}.

\begin{figure}
\begin{algorithmic}
\Function{ssaConvert}{$trace$}
  \State $trace' \gets \emptyset$
  \State $\forall x. last(x) \gets x_1$  
  \ForAll{$instr \in trace$}
    \If{$instr = Add(d, s, t)$}
      \State $s' \gets last(s)$
      \State $t' \gets last(t)$
      \State $d_i \gets last(d)$
      \State $last(d) \gets d_{i + 1}$
      \State $instr' \gets Add(d_{i + 1}, s', t')$
      \State \textsf{append}$(trace', instr')$
    \Else $\text{ } \ldots$ \Comment{Other cases}
    \EndIf
  \EndFor
  \State \Return $trace'$
\EndFunction
\end{algorithmic}
\caption{SSA conversion of traces}
\label{ssaalgo}
\end{figure}

\subsection{Queries}

The \textsf{Query} module is responsible for interfacing with the \textsf{Z3} solver to determine the satisfiability of path constraints, leading to unexplored program paths.

The API for the module is the function

\begin{lstlisting}[language=scala]
def solve(trace: Trace): Option[Soln]
\end{lstlisting}

The input to \textsf{solve} is a trace that has been processed by \textsf{Trace}, likely containing a recently-negated path condition (negated by the \textsf{Concolic} module from Section \ref{redux}). \textsf{solve}'s task is to convert the trace into an equivalent logical formula, hand it to \textsf{Z3}, and then return the solution found by the solver.

A \textsf{Soln} is just a \lstinline{Seq[RegVal]} (\textsf{solve} returns an \textsf{Option[Soln] in case the formula is unsatisfiable}), and a \textsf{RegVal} is an algebraic datatype of the form

\begin{lstlisting}[language=scala]
case object Unbound extends RegVal
case class Fixed(v: Long) extends RegVal
\end{lstlisting}

A \textsf{RegVal} represents the value of a register in the model returned by the solver. \textsf{Unbound} represents the case when the solver does not constrain a register in its solution. By contrast, \textsf{Fixed} is used when the solver requires a specific value.

\subsubsection{Translation}

Figure \ref{smtlib} shows an example of how a program trace can be converted into a logical formula understood by the solver. In this case, the program has a single conditional branch, leading to one path condition. Presumably, we have already explored (perhaps from the initial run with random inputs) what happens when the branch is taken (that is $\$4 = \$5$ and so we skip the last addition). Now we are interested in exploring the case where the branch is not taken: for that, we need the path condition $\$4 \neq \$5$ at the end of the trace.

\begin{figure}
\begin{multicols}{3}
program:
\begin{lstlisting}
  add $3, $1, $2
  slt $4, $1, $2
  lis $5
  1
  beq $4, $5, skip
  add $6, $3, $1
skip:
\end{lstlisting}

\vfill\null
\columnbreak

trace:
\begin{lstlisting}
  add $3, $1, $2
  slt $4, $1, $2
  add $5, $1, $0
  $4 != $5
\end{lstlisting}

\vfill\null
\columnbreak

SMT-LIB:
\begin{lstlisting}[language=Lisp]
(declare-const r1 (_ BitVec 32))
(declare-const r2 (_ BitVec 32))
(declare-const r3 (_ BitVec 32))
(declare-const r4 (_ BitVec 32))
(declare-const r5 (_ BitVec 32))

(assert (= r0 (_ bv0 32)))
(assert (= r3 (bvadd r1 r2)))
(assert
  (= r4 (ite (bvslt r1 r2)
    (_ bv1 32)
    (_ bv0 32))))
(assert (= r5 (_ bv1 32)))
(assert (not (= r4 r5)))

(check-sat)
(get-model)
\end{lstlisting}
\end{multicols}
\caption{A sample program, a trace in it, and the trace's representation in SMT-LIB notation}
\label{smtlib}
\end{figure}

The third column in Figure \ref{smtlib} shows the logical formula that is produced by the \textsf{Query} module. The formula is satisfiable iff the path condition $\$4 \neq \$5$ can evaluate to true for certain values of $\$1$ and $\$2$. In Figure \ref{smtlib}, we have used the SMT-LIB \cite{barrett2010smt} representation of the query, which is a textual representation with a Lisp-like syntax. Tamarin uses Z3's Java API, for efficiency.

Every register in the trace has a corresponding ``constant'' in the formula ($r_1 \ldots r_6$). The constants have type ``32-bit bitvector'', to encode that registers are 32-bit integers. Consequently, arithmetic operations on registers are encoded as operations on bitvectors (e.g. \textsf{bvadd, bvslt}).

Some instructions like \lstinline{slt} have domain-specific semantics, so their translations are slightly more complicated: \lstinline{slt} specifically uses SMT-LIB's \textsf{ite} (if-then-else) construct.

There is no mutation in the formula, but that is ok because of our conversion to SSA from Section \ref{ssaconv}. Instead of mutation, relationships between constants are encoded using assertions. In general, we have one assertion per instruction in the trace.

Once all constant declarations and assertions have been specified, we can query Z3 to check the satisfiability of the formula via \textsf{(check-sat)}. If the formula is satisfiable, we can also request the satisfying model that Z3 found, with \textsf{get-model}. In this case, Z3 returns 

\begin{lstlisting}[language=Lisp]
(model (define-fun r1 () (_ BitVec 32) #x80000000)
       (define-fun r2 () (_ BitVec 32) #x80000000)
           ...)
\end{lstlisting}

Which indeed is a solution (albeit not one a human would have picked), because it makes \lstinline{slt $4, $1, $2} assign $0$ to $\$4$, which skips the branch.

\subsubsection{Memory Representation} 

In keeping with registers being represented as 32-bit bitvectors, Tamarin represents memory itself as an array of bitvectors. Below we show the translations of the \lstinline{sw} and \lstinline{lw} instructions:

\noindent \lstinline{lw $t, i, $s} \hspace{2cm} \lstinline{(assert (= r_t (select mem_last (bvadd i r_s))))} \\
\lstinline{sw $t, i, $s} \hspace{2cm} \lstinline{(assert (= mem_k+1 (store mem_k (bvadd i r_s) r_t))}

Notice how memory is also immutable: every \textsf{store} (\lstinline{sw}) operation returns a new memory constant (an array), and every \textsf{select} (\lstinline{lw}) uses the latest memory constant .

\subsection{Concolic Execution Redux}
\label{redux}

As mentioned before, the \textsf{Concolic} module implements the concolic execution engine and orchestrates the interactions between all modules.
We already presented the general structure in Figure \ref{concolicalgo}, but below we expand on some additional points.

\subsubsection{Alternation}

The main difference between Tamarin and a traditional concolic execution engine is that Tamarin needs to maintain state for the two programs it is comparing, and opposed to just one. Therefore, we represent the engine state with a tuple $(trace_1, trace_2, turn)$ of the last-executed trace for each program, together with a pointer to the last-run program. At every iteration, we use $turn$ to choose the program that is next in line for execution, and use the corresponding $trace_i$ to identify the last path-condition (and the corresponding inputs) that has not been previously negated. The programs are run on the new inputs, their traces recorded, and a new state $(trace_1', trace_2', turn')$ generated.

\subsubsection{Bounded DFS}

As in the original concolic testing tool, CUTE \cite{sen2005cute}, Tamarin explores program paths using a bounded depth-first search. To guarantee termination, Tamarin trims CPU traces so that they contain at most \textsf{depth} path conditions (later path conditions are simply ignored). The current depth is set at 50, but is configurable. 

\subsubsection{Comparing Results}

When are the results of the two programs consider contradictory? Only when both programs terminate with different values in $\$3$, or when one program fails while the other succeeds. In particular, non-termination is treated conservatively and no inferences are derived from it (this could be modified soundly if one of the two programs is given a ``priviledged'' status by considering it the specification). The full rules are those in Section \ref{progequiv}.

\subsubsection{Soundness and Completeness}

As we saw in Lemma \ref{soundness}, Tamarin is sound, but not complete. There are multiple sources of incompleteness: not all paths are explored (because of the \textsf{depth} parameter), termination is ensured via the \textsf{fuel}, and part of the program state is simply not tracked at all, leading to under-approximation.

\subsubsection{Efficiency}

We have not done an formal complexity analysis of the algorithm implemented by Tamarin. In any case, as with other concolic execution tools, there is a risk of combinatorial path explosion, which can be traded away in exchange for a loss in precision via the \textsf{depth} parameter. A similar tradeoff exists with the \textsf{fuel} parameter. The theories used by Tamarin via Z3 are quanitifier-free bitvectors and arrays, including multiplication, which is potentially problematic. If further testing revealed an impact to performance, we would look into concretization strategies to handle the multiplication case.

\section{Evaluation}
\label{evalsect}

To evaluate the correctness of Tamarin, we hand-wrote a set of assembly programs (program pairs, to be more precise) that exercise different functionality in the tool. The test programs are summarized in Figure \ref{testtable}.

\begin{figure}
\begin{tabular}{l | l}
\textbf{Program} & \textbf{Tested Property} \\
\hline
\textsf{stack} & Can reason about pushs and pops to the stack \\
\textsf{fun} & Finds counterexamples  across function calls \\
\textsf{while} & Able to reason about loops that iterate less than \textsf{depth} times \\
\textsf{infinite} & Treats infinite loops conservatively \\
\textsf{div0} & Can handle CPU exceptions \\
\textsf{nested} & Reasons about nested conditional statements
\end{tabular}
\caption{Test programs for Tamarin}
\label{testtable}
\end{figure}

The tests we have to date focus on correctness (and, indirectly, measure efficiency), but they are all small and simple programs. To evaluate whether Tamarin is of practical use, we plan to collect student-written compilers for an undergraduate course at the University of Waterloo. We can then feed these submitted compilers sample programs written in a Scala-like language. We can compare the MIPS code generated by the student's compilers to that of the reference implementation, using Tamarin. Finally, we will compare Tamarin's results with the current black-box testing approach used in the course, and investigate whether Tamarin provides better accuracy.

\section{Related Work}
We now survey some of the existing literature on program equivalence that is relevant to the project. In general, many tools have been built for checking program equivalence, differing on the language they target (high level vs assembly), their level of generality (one-off tools vs frameworks or even intermediate languages that serve as the backend for multiple tools), the language features they support (loops vs loop unrolling) and their soundness and completeness guarantees (concolic testing based tools vs complete exploration of a bounded search space). However, we were not able to find a tool that both does program equivalence and works with raw assembly programs.

The two ecosystems of tools most relevant to our work are KLEE and Boogie.

\subsubsection{KLEE}

\cite{cadar2008klee}  introduced KLEE, which is now a popular backend for program verification tasks. It implements a symbolic VM for LLVM bitcode. However, unlike Tamarin, the input to KLEE is at the C level. \cite{ramos2011practical} build UC-KLEE on top of KLEE. Given two C functions, UC-KLEE checks whether they are equivalent (up to a fixed input size of 8 bytes). Their notion of equivalence considers not only function return values, but also memory locations reachable from them. As far as we can tell, UC-KLEE was never released publicly.

\subsubsection{Boogie}

\cite{barnett2005boogie} present Boogie, an intermediate verification language that can be targeted as an IR for program verification tasks. \cite{lahiri2012symdiff} built SymDiff on top of Boogie to test for program equivalence. However, even though Boogie can encode MIPS, SymDiff requires that procedures in the two compared programs be given in pairs via a pre-established correspondence. This is so that Boogie can reason about one procedure at a time, making conservative assumptions about the state of the other procedures. This makes it unclear whether SymDiff could be used to verify equivalence of arbitrary MIPS programs whose structures we do not know a-priori. \cite{hawblitzel2013will} verify compiler correctness by checking program equivalence of emitted assembly code. They use x86, Boogie, and SymDiff.

\subsubsection{Other related work}

\cite{person2011directed} combine static analysis and symbolic execution for program equivalence. \cite{egele2014blanket} use \emph{Blanket Execution} to identify portions of program binaries that are ``'similar', based on randomized testing and having similar side-effects. \cite{partush2013abstract} use abstract interpretation to prove program equivalence for numerical programs. \cite{yang2014property} show how verify that manually-annotated program properties continue to hold as a program evolves. \cite{necula2000translation} verifies equivalence of IRs before and after compiler passes. \cite{francesco2014grease} present GreASE, which focuses on non-equivalence checking. \cite{sharma2013data} are able to check equivalence of x86 loops, unlike most of the other tools, which need to unroll loops.

\section{Future Work}

Tamarin is still very much a work-in-progress. Below we list some of the outstanding work.

\subsubsection{Data Structure Generation} The CUTE tool \cite{sen2005cute} is able to generate not only numeric inputs, but also C-like \textsf{structs} using a somewhat inductive approach: to generate a \textsf{struct}, first generate \textsf{null}, then generate an instance of the \textsf{struct} where all fields are \textsf{null}, and then iterate the construction. This is harder to do in Tamarin, because it is working at the assembly level. Unlike in C, where we have \textsf{struct} definitions to guide the ``shapes'' we need to generate, Tamarin has no type-level information to guide the creation of data structures. One option is to specify such shape with metadata that is given to Tamarin as input together with the two programs.

\subsubsection{Restarts} Because we do not precisely track the semantics of the executed MIPS program (for example, we ignore the \textsf{pc}), it is possible that we solve a set of path constraints, but the resulting input does not lead the MIPS program along the expected path. In these cases, Tamarin currently stops evaluating the ``misdirected'' program (we will however continue to concolically execute the second program). Another approach, also used in CUTE, is to restart the engine with random inputs. We do not currently know how common restarts are, but they might be worth implementing for more complicated programs.

\subsubsection{Experiments} As mentioned in Section \ref{evalsect}, we would like to run Tamarin with compiler-generated MIPS programs, to test scalability and real-world use. This would be easier to do if we could generate data structures, since many programs take more two integers as input.

\section{Conclusions}

Program equivalence is a hard problem. Equivalence of assembly language programs is arguably even harder, because it needs to handle the unstructured nature of assembly without type or grammar-level information.

Surprisingly, there does not seem to be a program equivalence tool that works at the assembly level (for arbitrary programs). Tamarin is one step in this direction. 

\bibliographystyle{apalike}
\bibliography{refs}

\end{document}